\theoremstyle{definition}
\newtheorem{definition}{Definition}
\newtheorem{algorithm}{Algorithm}
\theoremstyle{plain}
\newtheorem{theorem}{Theorem}
\renewcommand{\vec}[1]{\bm{#1}}
\def \Xpp {\vec{X}}
\def \F {\mathcal{F}}
\def \P {\mathbb{P}}
\def \Ns {\mathcal{N}}
\def \N {\mathbb{N}}
\def \R {\mathbb{R}}
\def \B {\mathcal{B}}
\def \f {\mathrm{f}}
\def \M {\mathcal{M}}
\def \C {\mathcal{C}}
\def \d {d}
\def \dist {\mathrm{dist}}
\begin{document} 

\title{Generation of Spatially Embedded Random Networks to Model Complex Transportation Networks}
\author{J\"urgen Hackl}
\email{hackl@ibi.baug.ethz.ch}
\affiliation{Institute for Construction Engineering and Management, Swiss Federal Institute of Technology, Zurich, Switzerland}
\author{Bryan T. Adey} 
\affiliation{Institute for Construction Engineering and Management, Swiss Federal Institute of Technology, Zurich, Switzerland} 
\date{\today} 

\begin{abstract} 
Random networks are increasingly used to analyse complex transportation networks, such as airline routes, roads and rail networks. So far, this research has been focused on describing the properties of the networks with the help of random networks, often without considering their spatial properties. In this article, a methodology is proposed to create random networks conserving their spatial properties. The produced random networks are not intended to be an accurate model of the real-world network being investigated, but are to be used to gain insight into the functioning of the network taking into consideration its spatial properties, which has potential to be useful in many types of analysis, e.g. estimating the network related risk. The proposed methodology combines a spatial non-homogeneous point process for vertex creation, which accounts for the spatial distribution of vertices, considering clustering effects of the network and a hybrid connection model for the edge creation. To illustrate the ability of the proposed methodology to be used to gain insight into a real world network, it is used to estimate standard structural statistics for part of the Swiss road network, and these are then compared with the known values.\\

\noindent\doi{10.1007/978-3-319-47886-9_15}
\end{abstract} 


\maketitle 

\section{Introduction}
\label{sec:1}

Transportation networks are essential for economic growth and development. Over the past decades it has become obvious that the analysis and understanding of large-scale infrastructure networks is important for research, engineering and society. The failure or damage of an infrastructure system could cause huge social disruption. It could be out of all proportion to the actual physical damage \citep{Vespignani2010}. Thus, understanding the general principles, leading to the complex structures of these networks and their ability to withstand failures, natural hazards and man-made disasters, is critical for evaluating risk related to transportation networks, and for designing robust transportation networks to keep the risk within acceptable limits \citep{Schneider2011}.

In order to estimate this risk both the probability of different types of network failures and their associated consequences must be estimated. The estimation of the types of network failures is often estimated using percolation models in which some fraction of the total number of networks vertices or edges are removed. Using percolation models, as the number of vertices or edges are removed, the network is seen as undergoing transitions from the phase of connectivity (fully functional network) to the phase of dis-connectivity (non-functional network) \citep{Li2015}.

A good estimation of the type of failure requires a good estimation of how the failure of part of the network can spread, which is something that is dependent on the network topology and requires appropriate consideration of the mutual interplay between the structural complexity and functional dynamics of the network \citep{Kroger2011}. The estimation of the type of network failure is usually made using the exact network topology. Although good when possible, this approach is not always possible. For example, it is increasingly difficult with increasing network size and complexity, with increasing periods of time to be investigated, and increasing detail required \citep{Cadini2015}. It would be useful to have an easier, although less accurate way of being able to approximate the types of failures.

In this article, a model framework is proposed that has potential to be used for this purpose. The model framework can be used to create random networks that conserve the spatial properties of the networks to be analysed, and, therefore, allows more features of the transportation network to be captured then when using random networks developed using other model frameworks. The resulting random networks are, even if they are not accurate model of the real-world network being investigated, useful in gaining insights into the different types of network failure taking into consideration its spatial properties. The random networks developed taking into consideration their spatial properties are herein referred to as \emph{spatially embedded random networks}.

\section{Methodology}
\label{sec:2}

\subsection{Definition of a spatially embedded random network}
\label{sec:21}

Spatially embedded random networks can be seen as a special kind of random network, where vertices are placed randomly according to a specific distribution within a metric space. Edges are assigned to each pair of vertices with a probability, taking spatial properties into account.

\begin{definition}[Spatially embedded random network]
A random network $G=(V,E)$ defined on $S$ is called a spatially embedded random network if:
\begin{enumerate}[label={(\roman*)}]
\item $V$ is a mapping $\Xpp$ from a probability space $(\Omega_1,\F_1,\P_1)$ into a into a metric space $(N_\f,\Ns_\f)$. Where $\Omega_1$ is a sample space of all possible outcomes, $\F_1$ is a set of all considered events, and $\P_1$ is a probability measure of the events. Where $N_\f$ is defined as a space of finite subsets of $\R^d$ and $\Ns_\f$ is a Borel $\sigma$-algebra $\B$, such that for all bounded Borel sets $B \subseteq S$, the mapping $\Xpp\to N_{\Xpp}(B)$ is measurable.
\item $\M(\Xpp,g)$ is a connection model mapping from $\Omega = \Omega_1 \times \Omega_2$ into $N_\f \times \Omega_2$ defined by $(\omega_1,\omega_2) \to (\Xpp (\omega_1),\omega_2)$, where the sample space $\Omega_2$ is defined as $\prod_{\{K(n_u,z_u),K(n_v,z_v)\}}[0,1]$. The realization corresponding to $(\omega_1,\omega_2)$ is obtained for any two vertices $u$ and $v$ of $\Xpp(\omega_1)$, considering binary cubes. An edge $\{u,v\}$  exists if and only if $\omega_2 < g$  where $g$ is the spatially dependent connection function.
\end{enumerate}
\end{definition}

\cite{Barnett2007} pointed out that spatially embedded networks are essentially \enquote{ensemble-of-ensembles}, with the following two levels of randomness: the placement of the vertices, and an edge assignment, giving the coordinates of the vertices. Therefore, the properties of the graph $G$ are conditional on the vertex placement $\Xpp$.

In order to create such a spatially embedded network, a stochastic model for the spatially embedded vertices has to be introduced. As well as a connection model for the assignment of the edges, depending not only on the distances between the vertices but also other (spatial) properties.

\subsection{Vertex creation}
\label{sec:22}

The spatially embedded vertices are to be created using (inhomogeneous) Poisson point processes.
\begin{definition}[Poisson point process]
A Poisson point process with intensity function $\lambda$ is a point process $\Xpp$ on $S$, characterized by the following two properties:
\begin{enumerate}[label={(\roman*)}]
\item for any $B\subseteq S$, the number $N(B)$ of points in $B$ is a Poisson random variable with parameter $\lambda(B)$
\begin{equation}
  \P\{N(B)=k\} = \frac{\lambda(B)^k}{k!} \exp\{-\lambda(B)\} \qquad k\in\N_0,B\in S.
\end{equation}
\item if $N(B_1),\dots,N(B_n)$ are independent random variables for each $n\in\N$ and pairwise disjoint sets $B_1,\dots,B_n \in S$ with $\lambda(B_i)<\infty$.
\end{enumerate}
\end{definition}
If $\lambda$, the expected number of points per unit area, is constant, the Poisson process is called a \emph{homogeneous Poisson process} on $S$ with \emph{rate} or \emph{intensity} $\lambda$; otherwise it is said to be an \emph{inhomogeneous Poisson process} on $S$. Moreover, if $\lambda=1$ the process is called the \emph{standard Poisson point process} or \emph{unit rate Poisson process} on $S$.

If the average density of points is a function $\lambda(u)$ defined at all spatial locations $u$, then it is possible to fit an inhomogeneous Poisson process model to an observed covariate. Thereby, a covariate is a spatial function $Z(u)$ defined at all spatial locations $u$. For instance, the covariate $Z(u)$ might be the altitude or population density at location $u$.

\subsection{Edge creation}

The spatially embedded edges are to be created using a combination of a \emph{deterministic} and a \emph{random} connection model. In the first step, vertices are connected if they are close to each other (deterministic connection model). In the second step the edges are added, removed or rewired, by applying a random connection model. A parameter  is used to weight the two connection models. As $\kappa\to 0$, the connection model becomes a purely deterministic one. As $\kappa \to 1$, the connection model becomes purely a random one. More formally the connection model is:

\begin{definition}[Hybrid connection model]
\label{def:hybrid_connection_model}
Let $\Xpp$ be a finite point process on $\R^d$, $\M_d(\Xpp)$ a deterministic connection model, $\M_r(\Xpp,g)$ a random connection model with connection function $g$, and a parameter $\kappa$ satisfying $0\leq\kappa\leq 1$. To build the network
\begin{enumerate}
\item Construct a graph $G(V,E)$, where $V=\Xpp$ and $E = \M_d(\Xpp)$.
\item Rewire every edge $(v_i,v_j)$ for every vertex $v\in V$ with probability $\kappa$ according to $\M_r(\Xpp,g)$.
\end{enumerate}
\end{definition}

This model draws on the research in two major fields: \emph{computational geometry} and \emph{random graphs}.

Computational geometry is the study of (deterministic) algorithms for the solution of geometric problems. It often makes use of graphs. Once graphs, and therefore, the connections between vertices, are established many network properties can be determined. For example, there are a number of constructs in computational geometry that encode important aspects of the notion of nearness, such as Voronoi regions. Voronoi regions are defined in terms of nearest neighbours, their dual, the Delaunay triangularisation, minimum spanning trees, which are defined in terms of the shortest path through the vertices, and various extensions and generalisations of these ideas \citep{Marchette2004}.

Random graphs are generated using stochastic models to connect vertices. The \enquote{classical} theory of random graphs, has been established by \cite{Gilbert1959,Erdos1959,Erdos1960,Erdos1964}. Specifically, in the \emph{Gilbert model} $G(n,p)$  every possible edge occurs independently with a certain probability, while in the \emph{Erd{\H{o}}s-R{\'{e}}nyi model} $G(n,M)$ equal probabilities are assigned to all graphs with exactly $M$ edges. Many other connection models have been developed. For example, in order to restrict the degree distribution, which is not done in the generation of a classical random graph model, so-called \emph{configuration models} can be used, in which the degrees of vertices are fixed before the random graph is generated. This makes it possible to generate random networks with the desired degrees of vertices. Another example is the use of \emph{preferential attachment} mechanisms to help represent how networks change over time. Something which is not possible with \emph{static} models, such as a classical random graphs or random graphs developed using configuration models. When preferential attachment mechanisms are used the model vertices are added sequentially with a fixed number of edges connected to them.

The methods used in both of these fields both have advantages and disadvantages if they are to be used to generate spatially embedded random networks. Methods used in the field of computational geometry have the advantage that they are usually able to create edges that are not only dependent on the distance between two vertices, but also on the position of other vertices \citep{Yukich1998}. They, however, have the disadvantage that through the use of a deterministic connection algorithm, many structural characteristics, observed in real-world networks, cannot be modelled \citep{Penrose2003}. Methods used in the field of random graphs have the advantage that vertices can be connected with a certain probability, which depends both on the types of vertices and the distance between them \citep{Meester1996}. They, however, have the disadvantage that assumptions of complete independence among possible edges are largely untenable in practice \citep{Kolaczyk2009}.

In order to overcome the shortcomings of both, deterministic and random connection models, a hybrid connection model is introduced in this work (see Definition~\ref{def:hybrid_connection_model}) and a generic connection function.

Let $g$ be a connection function, that is a Borel measurable function from $[0,\infty) \to [0,1]$. In the common random connection models, two distinct vertices $u,v\in V$ are connected with probability $g(\dist(u,v))$. In other words the probability of assigning an edge to a pair of vertices depends only on their Euclidean distance to each other. However, in the context of spatially embedded random networks the probability of connecting two edges might not only depend on the distance but also on some spatial properties. Therefore, the connection function can be generalised.

\begin{theorem}
\label{the:connection_function}
  Let $g : [0,\infty) \to [0,1]$ be a connection function, $z : U \subseteq \R^d \to \R$ some scalar field\index{scalar field}, and any edge $\vec{e} : [u,v] \to \R^d \; \forall u,v\in V$ a piecewise smooth curve $\C\subset U$. Two distinct vertices $u,v\in V$ are connected with probability $g(I_{e_i})$, with 
\begin{equation}
  I_{e_i}= \int_{e_i} z \d s := \int_u^v z(\vec{e}(t))||\dot{\vec{e}}(t)|| \d t,
\end{equation}
where $\vec{e}(t)$ is a parametrization of curve $\C$, with derivative $\dot{\vec{e}} (t)$.
\end{theorem}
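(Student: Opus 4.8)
The plan is to establish that the prescription ``connect $u$ and $v$ with probability $g(I_{e_i})$'' is well posed: that $I_{e_i}$ is a well-defined nonnegative real number depending only on the geometric edge $\C$ and the field $z$ (and not on the chosen parametrization), that $g(I_{e_i})\in[0,1]$ is therefore a genuine probability, and that the rule reduces to the classical one $g(\dist(u,v))$ in the appropriate limit. The whole argument rests on the standard theory of the scalar line integral, so the first step is to define $I_{e_i}$ on a single smooth arc and then extend it additively to the piecewise smooth curve $\C$.

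First I would treat a single smooth piece $\vec{e}:[a,b]\to U$. Since $\C\subset U$ and $z$ is a scalar field on $U$, the integrand $t\mapsto z(\vec{e}(t))\,\|\dot{\vec{e}}(t)\|$ is the product of the pullback of $z$ with the continuous, bounded speed $\|\dot{\vec{e}}(t)\|$ on the compact interval $[a,b]$; assuming $z$ is at least Borel measurable and locally bounded, this product is integrable, so $I_{e_i}$ exists. The key point is parametrization invariance: for any orientation-preserving reparametrization $t=\varphi(\tau)$ with $\varphi$ a $C^1$ diffeomorphism, the change-of-variables formula together with the chain rule applied to $\dot{\vec{e}}(\varphi(\tau))\,\varphi'(\tau)$ cancels the Jacobian against the norm, leaving the integral unchanged. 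This shows $I_{e_i}$ is an intrinsic quantity of the pair $(\C,z)$, which is exactly what is needed for the connection probability to be assigned unambiguously to the edge.

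Next I would extend to the piecewise smooth case by partitioning $[u,v]$ at the finitely many non-smooth junction points and summing the contributions of the smooth sub-arcs; the junctions form a finite, hence measure-zero, set of parameter values and do not affect the integral. To place $I_{e_i}$ in the domain of $g$ I would then verify $I_{e_i}\geq 0$: since $\|\dot{\vec{e}}(t)\|\geq 0$, taking $z\geq 0$ on $U$ forces $I_{e_i}\in[0,\infty)$, so that $g(I_{e_i})\in[0,1]$, and the Borel measurability required of a connection function is inherited from that of $g$ composed with the measurable map that sends an outcome to $I_{e_i}$. Consistency with the classical model then follows by specialization: if $z\equiv 1$ and $\vec{e}$ is the straight segment from $u$ to $v$, then $I_{e_i}=\int_u^v\|\dot{\vec{e}}(t)\|\,\d t$ is the arc length of $\C$, which equals $\dist(u,v)$, recovering $g(\dist(u,v))$.

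The main obstacle I anticipate is not the existence of the integral but the two well-definedness issues. Proving genuine parametrization independence cleanly across the non-smooth junctions of a piecewise smooth curve requires some care, but the real sticking point is reconciling the stated codomain $z:U\to\R$ with the domain $[0,\infty)$ of $g$: without a sign restriction the integral can be negative and $g(I_{e_i})$ is undefined. I would therefore either make the hypothesis $z\geq 0$ explicit or replace the integrand by $|z|$, and I would flag this as the step demanding the most attention.
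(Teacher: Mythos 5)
Your argument is correct and strictly more thorough than what the paper actually does. The paper's entire proof consists of the single consistency check you relegate to the end of your second paragraph: assume the edge is a straight segment, set $z\equiv 1$, and observe that $I_{e_i}=\int_u^v\|\dot{\vec{e}}(t)\|\,\d t=\dist(u,v)$, so the construction reduces to the classical random connection model $g(\dist(u,v))$. Everything else you prove --- existence and parametrization invariance of the line integral, additive extension across the non-smooth junctions, and the verification that $I_{e_i}$ actually lands in the domain of $g$ --- is simply not addressed in the paper, which treats the statement as a definition-with-a-sanity-check rather than as a claim requiring a well-posedness argument. Your route buys a genuinely well-posed construction; the paper's buys only the assurance that the generalisation is backward-compatible. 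In particular, your observation that the stated hypotheses are inconsistent --- $z:U\to\R$ may be negative while $g$ is only defined on $[0,\infty)$, so $g(I_{e_i})$ can be undefined without assuming $z\geq 0$ or replacing $z$ by $|z|$ --- identifies a real gap in the theorem statement that the paper never confronts (and indeed cannot, since its only worked case is $z\equiv 1$). You should feel free to present your proof as a completion of the paper's, noting explicitly that the added hypothesis $z\geq 0$ is needed for the statement to make sense.
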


\begin{proof}
  Assuming an edge is a straight line between two vertices, it is trivial to show that if $z\equiv 1 \Rightarrow I_{e_i}=\dist(u,v)$. i.e. no additional (spatial) properties influencing the connection probability.
\begin{equation}
  \int_{e_i} \d s = \int_u^v ||\dot{\vec{e}}(t)|| \d t = I_{e_i} = \dist(u,v).
\end{equation}
\end{proof}

\subsection{Network generation}
\label{sec:23}

The spatially embedded random network is to be generated using the following three steps.

\begin{algorithm} :
\begin{description}[leftmargin=1.15cm]
\item[Step 1] Select an inhomogeneous Poisson point process model with an intensity function that depends on an observed covariate. This includes the selection of scaling parameters for all covariates.
\item[Step 2] Create the spatially embedded vertices, which include the selection of the number of simulated vertices  $N$.
\item[Step 3] Create the spatially embedded edges, which includes the selection of the values of the parameter $\kappa$. 
\end{description}
\end{algorithm}

If possible, the values of the parameters should be selected so that they correspond with real world data. If this is not possible, e.g. there is no readily available data, than data for similar areas may be used to together with expert opinion.

\section{Example}
\label{sec:4}

\subsection{Background}
\label{sec:41}
To illustrate the ability of the proposed methodology to be used to gain insight into a real world network, it is used to estimate standard structural statistics for part of the Swiss road network, and these are then compared with the known values.

The part of the Swiss road network (Fig.~\ref{fig:1}) that is investigated is located around the city of Chur, the capital of Grisons (or Graub\"unden in German) the largest and easternmost canton of Switzerland. Grisons is a mountainous area and includes parts of both the Rhine and Inn river valleys. Forty-one per cent of the population of Grisons live at altitudes above 1'000 MASL. The highest mountain is the Piz Bernina at 4'049 m, and the lowest point is the border with Ticino at 260 m. The Grisons' road network is comprised of circa 163 km national roads, 597 km main roads and 835 km minor roads. The canton is crossed in a north-south direction from the A13 motorway. The investigated area of the Grisons includes the districts of Imboden and the norther part of Plessur. This area is located next to the river Rhine and contains the city of Chur and Grisons' industrial area, together with the most important transportation links in the canton.

\begin{figure}[h]
\includegraphics[width=\columnwidth]{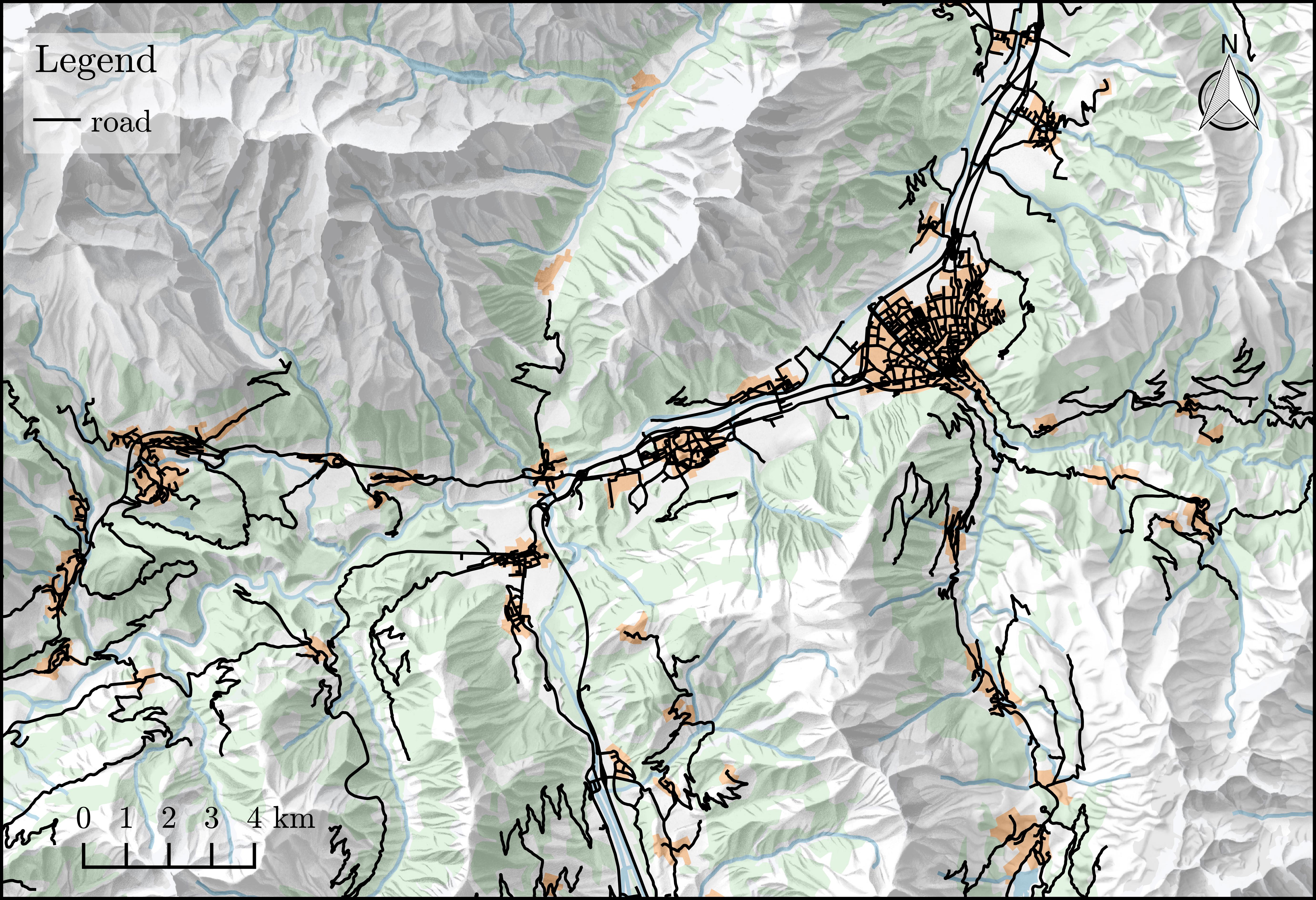}
\caption{Overview of the investigated area. (The real-world road network is shown with black lines, without distinctions between different types of roads.)}
\label{fig:1}
\end{figure}

\subsection{Description of data}
\label{sec:42}

The two types of data used to generate the spatially embedded random network were 1) terrain elevation (altitude) and 2) population density (Fig.~\ref{fig:2}). The population density was obtained from the Swiss census statistic. The data is of high enough quality that, it is possible to observe how the structures and the development of the population and households have changed over time. The data for the statistics of population and households (STATPOP) is geo-referenced, where each hectare represents a data record \citep{FSO2011}. The source for the terrain data is the RIMINI terrain model. This is a matrix model with a spatial resolution of 100 m, and covers all of Switzerland \citep{Kolbl2006}. Both databases were available as a GeoTiff for the Swiss coordinate system CH1903/LV03 LN02 (ESPG code: 21781).

The road network data was taken from the VECTOR25 data-set, provided by \cite{swisstopo2015b}. This set of data describes approximately 8.5 million objects with their position, form and its neighbourhood relations (topology), as well as the kind of object and further special attributes. VECTOR25 is composed of 9 thematic layers, one layer representing the road network. The classification of roads is based on \cite{swisstopo2011} guidelines. The roads are represented as lines. If there are level crossings then roads share the same intersection point. The VECTOR25 data-set exhibits a full national coverage in homogeneous form and quality, with an accuracy of \SIrange[range-units = single]{3}{8}{m} and it is delivered as an ESRI Shapefile for the Swiss coordinate system CH1903/LV03 LN02.

\begin{figure}[h]
\includegraphics[width=\columnwidth]{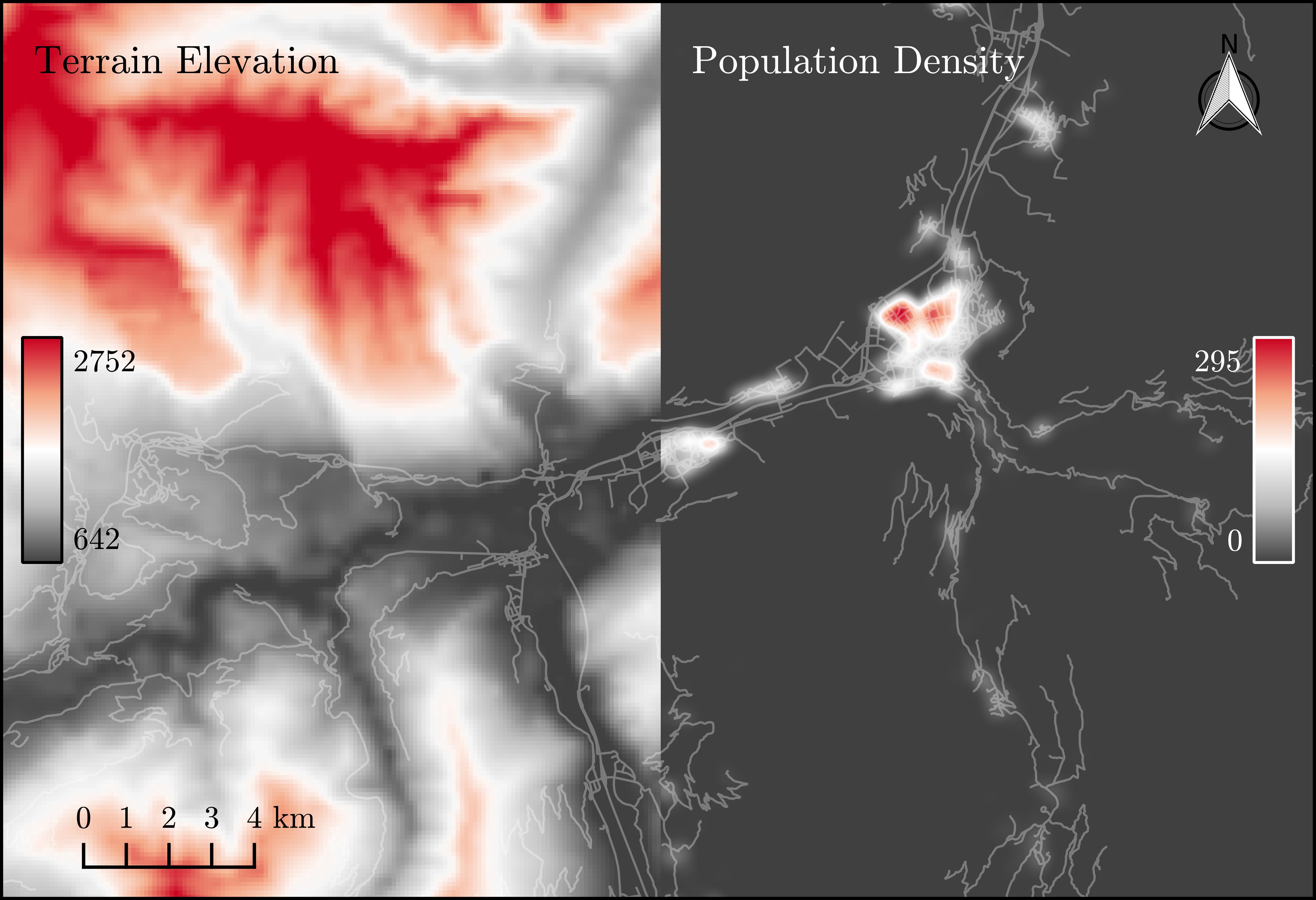}
\caption{Spatial input data for the model. On the left the terrain elevation and on the right the population density.}
\label{fig:2}
\end{figure}

\subsection{Generation of random networks}
\label{sec:43}

In this example the road intersections are modelled as vertices, while the road itself is represented by the edges. The population density is used as a covariate for the spatially embedded vertices and the terrain elevation is used as an input for the connection model. These are considered to be reasonable assumptions, since more road intersections occur in dense populated areas, while roads at high elevations are rather rare. The intensity function is, therefore, proportional to the population density:
\begin{equation}
  \lambda(u) = \beta \cdot Z(u)
\end{equation}
where $\beta$ is a scaling parameter and $Z(u)$ is the population density at $u$.

The deterministic connection model used is a \emph{relative neighbourhood graph}. It was proposed by \cite{Toussaint1980} and accounts for the \emph{relative closeness} between points. The random connection model used is a classical random connection model, i.e. edges are randomly added or removed to or from the graph, making the probability of existence of each edge indirectly proportional to the accumulated altitude over the edges. Hence, as the altitude difference between two vertices increases, the probability that they are connected decreases.

The values of the parameters were determined using data from the real-road network and a Markov Chain Monte Carlo simulation, which minimised the standard error of: the average clustering coefficient, the average shortest path length, the degree assortativity, the diameter, the number of edges and vertices, and the total length of the road network. The values of the real network are given in Tab.~\ref{tab:1} and the estimated parameters are given in Tab.~\ref{tab:2}.

\begin{table}
\caption{Properties of the real world network.}
\label{tab:1}
\small
\begin{tabularx}{\columnwidth}{Xr}
\hline
\textbf{Parameter} & \textbf{Value}\\
\hline
 Number of vertices & 1329 \\
 Number of edges & 1892 \\
 Total length of the road network in km & 711 \\
 Diameter & 63 \\
 Average shortest path length & 23.64 \\
 Average clustering coefficient & 0.079 \\
 Degree assortativity & -0.143 \\
\hline
\end{tabularx}
\end{table}

\begin{table}
\caption{Fitted model parameters.}
\label{tab:2}
\small
\begin{tabularx}{\columnwidth}{cXcc}
\hline
\textbf{Sym.} & \textbf{Parameter} & \textbf{Value} & \textbf{Range}$^a$ \\\hline
$\beta$ & Scaling parameter for the covariate of population density & $4.71$ & $(+2.96,-2.32)$ \\
$N$ & Number of simulated vertices & $4059$ & $(+814,-922)$ \\
$\kappa$ & Hybrid connection model parameter  & $0.909$ & $(+0.0655,-0.0517)$ \\
$p$ & Rewiring probability for random connection model & $0.001$& $(+0.0429,-0.0000)$ \\
\hline
\end{tabularx}
\footnotesize$^a$ Range is based on the 90\% confidence level and observed from the MCMC simulation.
\end{table}

With the values of the parameters determined, the spatially embedded random networks were generated. One example is shown in Fig.~\ref{fig:3}.

\begin{figure}[h]
\includegraphics[width=\columnwidth]{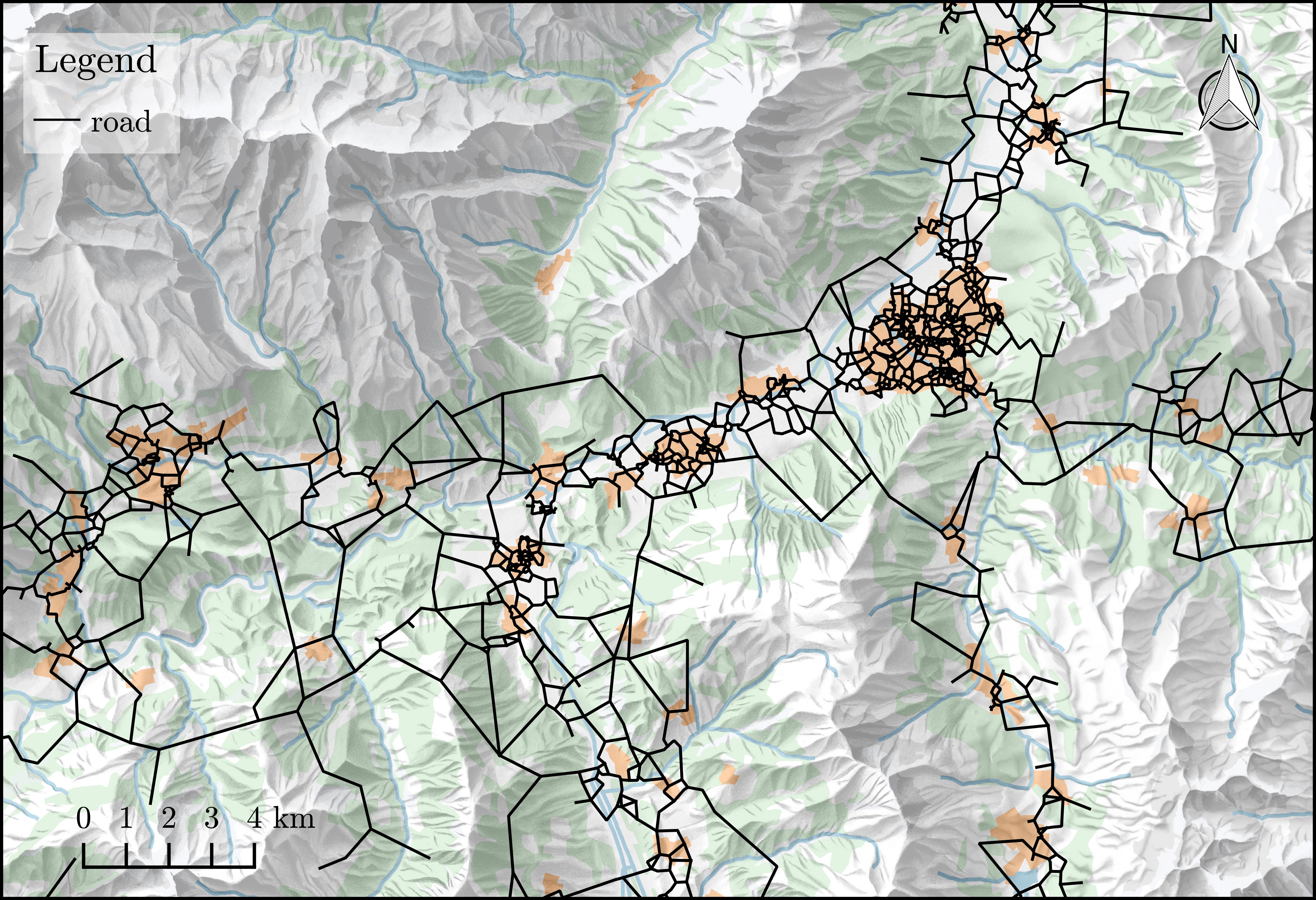}
\caption{One example of a generated spatially embedded random network.}
\label{fig:3}
\end{figure}

\subsection{Illustration of usefulness}
\label{sec:44}

In order to illustrate the ability of the proposed methodology to be used to gain insight into a real world network, the relationship between the size of the largest connected cluster, measured in terms of number of edges, and the ratio of the number of failed edges in the network and the total number of edges in the network, i.e. the failure rate, which is also sometimes taken as a measure of robustness \citep{Schneider2011}. 

The estimation is done by performing a percolation study, which describes how a network transitions from connected to disconnected \citep{Li2015}. The percolation study was done by removing edges of the network both randomly and systematically for both the random networks and the real network. When removed randomly, an edge was selected at random and removed, thereby the probability of selecting an edge is uniform over all edges in the network. When removed systematically they were removed in descending order of their betweenness centrality, which was calculated as
\begin{equation}
  c_B(v)=\sum_{s,t\in V}\frac{\sigma(s,t|e)}{\sigma{s,t}}
\end{equation}
where $V$ is the set of vertices, $\sigma(s,t)$ is the number of shortest $(s,t)$-paths, and $\sigma(s,t|e)$ is the number of those paths passing through edge $e$.

15'000 realisations of the random network were used in both cases. The results are shown in Fig.~\ref{fig:4}.

\begin{figure}[h]
\includegraphics[width=\columnwidth]{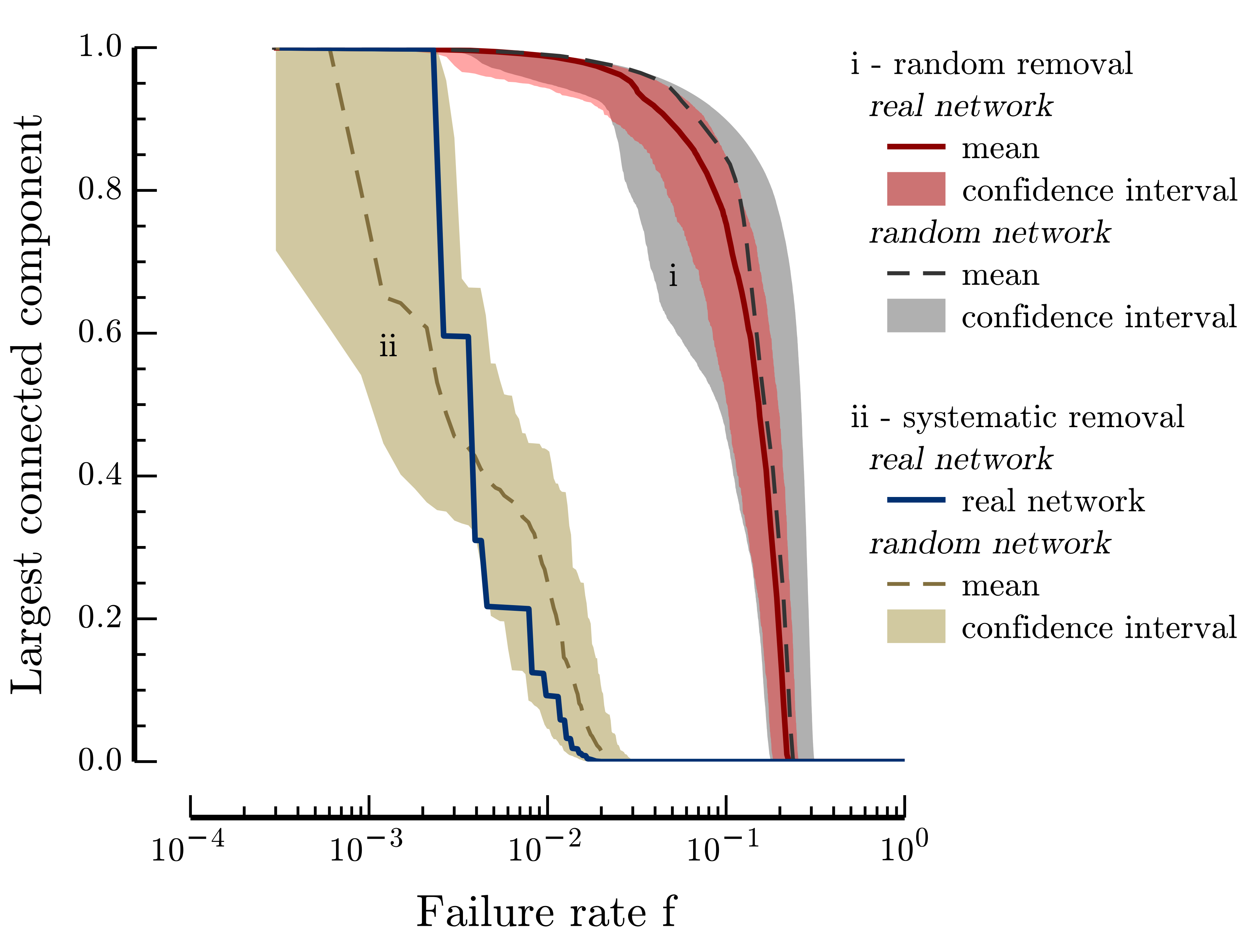}
\caption{Relationship between the size of the largest connected part of the network and the failure rate within a confidence interval of $2\sigma$.}
\label{fig:4}       
\end{figure}

A failure rate $f=0$ indicates that no edges have been removed, and $f=1$ indicates that all edges have been removed. In general, the higher the $f$ the smaller the remaining parts of the network that are still connected with each other. When the network contains a cluster then it percolates and the point at which the percolation transition occurs is called the percolation threshold. Thus, the presence of the largest connected cluster is an indicator of a network that is at least partly performing its intended function. Whereas, the size of the largest connected cluster indicates how much of the network is working \citep{Newman2010}.

The curves (i) on the right-hand side in Fig.~\ref{fig:4} represent the results of a random removal of edges. The curves (ii) on the left-hand side in Fig.~\ref{fig:4} show the systematic removal of edges based on their betweenness centrality.

The values obtained from the simulations using random removal on the random network overestimate those using random removal on the real network, i.e. the real network has a lower robustness, than the random network. 

The values using the real network, are, however, within the $\sigma 2$-confidence interval of the simulation, except at the beginning of the percolation process. This might be caused by the finite amount of simulations performed. The confidence interval was generated from estimating the relationship for all 15'000 random networks.

The values obtained from the simulations using systematic removal on the random network underestimates those using systematic removal on the real network until the largest connected component reaches half their initial size. Afterwards, the random approach overestimates the real behaviour, i.e. the random network has at the beginning of the percolation process a lower robustness than the real network and in the end a higher robustness. This is caused by the choice of the connection model. While the network is densely connected in high populated areas, only few connections are created between those areas. 

The values using the real network, are, however, within the $\sigma 2$-confidence interval of the simulation for the whole percolation process, as shown in Fig.~\ref{fig:4}. Obviously, only one realisation for the real network was possible, because the topology is not changing. While for the random network several realisations can be simulated, resulting again in a confidence interval.

\section{Discussion}
\label{sec:4}

As illustrated in the example, with only few parameters, a random network can be created which mimics the properties of a real network relatively accurately. Here, only four parameters (Tab.~\ref{tab:2}) and two spatial maps (Fig.~\ref{fig:2}) are necessary to generate a spatially embedded random transportation network. In other words, only very little information on the target network is needed.

The analysis using random removal shows that both the real network and the random networks behave similarly. The largest connected cluster decreases continuously and at a failure rate of $22.26 \pm 3.26 \%$ no cluster can be observed at the real network and at failure rate of $23.94^{+7.42}_{-6.02}\%$ at the random network. Both numbers indicate that the network is not densely connected which can also be observed in Fig.~\ref{fig:1} and \ref{fig:3}.

The analysis using systematic removal shows that both the real network and the random networks have relatively large increases in failure rate with a removal of a small number edges. For example, removing systematically $1.93\%$ edges of the real network or $2.41^{+0.75}_{-0.64}\%$ of the random network leads to a disappearance of the largest connected cluster (Fig.~\ref{fig:4}).

This behaviour is due to the specific spatial properties of the road network and how the edges are removed. In other words, there are only few edges are used to connect different clusters e.g. here cities and villages, and they are removed using betweenness centrality when removes these edges first. It may be a closer fit if edges were removed taking into consideration for the traffic flow across the network. This may also be better in the estimation of risk related to the network.

The differences between the real and the random networks, although not large, may be even smaller if: 1) the number of simulations performed was increased; 2) connection models the better accounted for the real network topology were used; 3) the uncertainty in the parameters used was reduced; or 4) spatial input parameters in addition to the population density and the elevation model were used to capture more properties of the target road network.

\section{Conclusions}
\label{sec:5}

In this work a methodology to develop models of spatially embedded random networks is proposed. It combines a spatial non-homogeneous point process for node creation, which accounts for the spatial distribution of nodes, considering clustering effects of the network and a hybrid connection model for the edge creation. In contrast to many other methodologies, this methodology takes into account spatial restriction.

It was shown that the proposed methodology could be used to develop models to estimate reasonably well the relationship between the largest connected part of a part of the Swiss road network and the ratio between the number of failed links and the total number of links. This was demonstrated by conducting percolation studies with both random and systematic removal of edges.

In the illustrated example the input parameters were population density and a terrain model, and the random network models were calibrated using the seven parameters shown in Tab.~\ref{tab:1}.

It could be shown that the behaviour of the randomly created networks is similar to that of the real network. It is suspected that even more similar behaviour is possible.

\section{Future work}
\label{sec:6}
The work presented here is a useful first step towards the improved understanding of real-world spatial networks, as it enables the generation of different realisations of the same network. This is useful to take into consideration uncertainties in network properties, in situations where it is too difficult to model the network directly. Until now, uncertainties in the network topology have mainly been avoided, and opens up a myriad of possibilities in network analysis. Three important overlapping areas where it is suspected that this methodology has substantial use are:
\begin{enumerate}[label={(\roman*)},noitemsep]
\item to assess risk on complex spatially distributed networks, 
\item to model networks where there is a limited amount of information available, and 
\item to model how networks might change over time in the future. 
\end{enumerate}
Future work is to be focused on: 
\begin{itemize}[noitemsep]
\item the development of model of multiple interdependent networks
\item the use of random networks to estimate risk taking into consideration the probabilities of occurrence of loading of the network and the consequences of the possible network states.
\end{itemize}

\bibliography{literature} 
 
\end{document}